\pgfplotsset{compat = 1.3}
\newtheorem{theorem}{Theorem}
\newtheorem{definition}{Definition}
\newtheorem{problem}{Problem}
\newcommand{\cycles}{\mathcal{C}(L,0)}
\newcommand{\K}{\mathcal{K}}
\newcommand{\url}[1]{\texttt{#1}}
\newcommand*\circled[1]{%
  \tikz[baseline=(C.base)]\node[draw,circle,inner sep=0.5pt](C) {#1};\!
}
\title{Operation Frames and Clubs in Kidney Exchange}
\author{Gabriele Farina\\
Computer Science Department\\
Carnegie Mellon University\\
{gfarina@cs.cmu.edu}\\
\And John P. Dickerson\\
Computer Science Department\\
University of Maryland\\
{john@cs.umd.edu}\\
\And Tuomas Sandholm\\
Computer Science Department\\
Carnegie Mellon University\\
{sandholm@cs.cmu.edu}
}
\pgfplotsset{
    box plot/.style={
        /pgfplots/.cd,
        black,
        only marks,
        mark=-,
        mark size=1em,
        /pgfplots/error bars/.cd,
        y dir=plus,
        y explicit,
    },
    box plot box/.style={
        /pgfplots/error bars/draw error bar/.code 2 args={%
            \filldraw[thick,fill=white] ##1 -- ++(1em,0pt) |- ##2 -- ++(-1em,0pt) |- ##1 -- cycle;
        },
        /pgfplots/table/.cd,
        y index=2,
        y error expr={\thisrowno{3}-\thisrowno{2}},
        /pgfplots/box plot
    },
    box plot top whisker/.style={
        /pgfplots/error bars/draw error bar/.code 2 args={%
            \pgfkeysgetvalue{/pgfplots/error bars/error mark}%
            {\pgfplotserrorbarsmark}%
            \pgfkeysgetvalue{/pgfplots/error bars/error mark options}%
            {\pgfplotserrorbarsmarkopts}%
            \path[dashed] ##1 -- ##2;
        },
        /pgfplots/table/.cd,
        y index=4,
        y error expr={\thisrowno{2}-\thisrowno{4}},
        /pgfplots/box plot
    },
    box plot bottom whisker/.style={
        /pgfplots/error bars/draw error bar/.code 2 args={%
            \pgfkeysgetvalue{/pgfplots/error bars/error mark}%
            {\pgfplotserrorbarsmark}%
            \pgfkeysgetvalue{/pgfplots/error bars/error mark options}%
            {\pgfplotserrorbarsmarkopts}%
            \path[dashed] ##1 -- ##2;
        },
        /pgfplots/table/.cd,
        y index=5,
        y error expr={\thisrowno{3}-\thisrowno{5}},
        /pgfplots/box plot
    },
    box plot median/.style={
        /pgfplots/box plot
    }
}
\begin{document}

\maketitle

\begin{abstract}
A kidney exchange is a centrally-administered barter market where patients swap their willing yet incompatible donors. Modern kidney exchanges use 2-cycles, 3-cycles, and chains initiated by non-directed donors (altruists who are willing to give a kidney to anyone) as the means for swapping.
We propose significant generalizations to kidney exchange. We allow more than one donor to donate in exchange for their desired patient receiving a kidney. We also allow for the possibility of a donor willing to donate if any of a number of patients receive kidneys. Furthermore, we combine these notions and generalize them.
The generalization is to exchange among organ \emph{clubs}, where a club is willing to donate organs outside the club if and only if the club receives organs from outside the club according to given specifications.
We prove that unlike in the standard model, the uncapped clearing problem is NP-complete.
We also present the notion of operation frames that can be used to sequence the operations across batches, and present integer programming formulations for the market clearing problems for these new types of organ exchanges.
Experiments show that in the single-donation setting, operation frames improve planning by 34\%--51\%. Allowing up to two donors to donate in exchange for one kidney donated to their designated patient yields a further increase in social welfare.
\end{abstract}

\section{Introduction}\label{sec:intro}
Kidney transplantation is the most effective treatment for kidney failure. However, the demand for donor kidneys far exceeds the supply. The United Network for Organ Sharing (UNOS) reported that as of October \num{28}th, \num{2016}, the waiting list for kidney transplant had \num{99382} patients.
	
Roughly two thirds of transplanted kidneys are sourced from cadavers, while the remaining one third come from willing healthy living donors. Patients who are fortunate enough to find a willing living donor must still contend with \emph{compatibility} issues, including blood and tissue type biological compatibility. If a willing donor is incompatible with a patient, the transplantation cannot take place.

This is where \emph{kidney exchange} comes in.
A kidney exchange is a centrally-administered barter market where patients swap their willing yet incompatible donors. Modern kidney exchanges use $2$-cycles, $3$-cycles, and chains initiated by non-directed donors (altruists who are willing to give a kidney to anyone) as the means for swapping.

The idea of kidney exchange was introduced by~\citeauthor{Rapaport86:Case}~[\citeyear{Rapaport86:Case}], and the first organized kidney exchanges started around \num{2003}~\cite{Roth04:Kidney,Roth05:Pairwise}. Today there are kidney exchanges in the US, Canada, UK, the Netherlands, Australia, and many other countries. In the US, around \num{10}\% of live-donor kidney transplants now take place via exchanges.

Kidney exchanges started as matching markets where one donor-patient pair would give to, and receive from, another donor-patient pair. In other words, $2$-cycles~\cite{Roth05:Pairwise} were the structures used. Then, kidney exchange was generalized to also use $3$-cycles~\cite{Roth07:Efficient}, and then short and finally never-ending chains initiated by non-directed donors (altruists who are willing to give to anyone without needing an organ in return)~\cite{Roth06:Utilizing,Rees09:Nonsimultaneous}.

Significant work has been invested into scaling the market clearing algorithms, that is, the algorithms that find the optimal combination of non-overlapping (because any one donor can give at most one kidney) cycles and chains~\cite{Abraham07:Clearing,Constantino13:New,Manlove15:Paired,Anderson15:Finding,Dickerson16:Position}.
	
We propose a significantly generalized, more expressive, approach to kidney exchange. We allow more than one donor to donate in exchange for their desired patient receiving a kidney.
We also allow for the possibility of a donor willing to donate if any of a number of patients receive kidneys. 
Furthermore, we combine these notions and generalize them.

Our generalization can be formalized around the concept of exchange among organ \emph{clubs}, where, roughly speaking, a club is willing to donate organs outside the club if and only if the club receives organs from outside the club according to given specifications.
More specifically, exchange clubs extend the notion of a donor-pair pair, allowing for a set of healthy donors equally willing to donate one of their kidneys in exchange for an equal (or greater) number of kidneys received by a target set of patients.

Forms of organ clubs already exist---under an arrangement where one gets to be in the club as a potential recipient if one is willing to donate one's organs to the club upon death. For example, there was such a club called \emph{LifeSharers} in the US for several years~\cite{Hennessey06:Members-Only}. It shut down in \num{2016} amid controversy regarding whether an organ club would actually hurt the nationwide organ allocation.  Similarly, there is an organ club in the military ``that allows families of active-duty troops to stipulate that their loved ones' organs go to another military patient or family~\cite{Kime16:Defense}.'' Also, Israel started an organ club where those who have given consent to become organ donors upon death (or whose family members have donated an organ in the past) get priority on the organ waitlist if they need organs; this increased organ donation in Israel by 60\% in just one year~\cite{Stoler16:Incentivizing,Ofri12:Israel}.
One way to think of the approach that we are proposing is as an inter-club exchange mechanism that increases systemwide good---and can
also
be applied to live donation.

Our approach is beneficial also in a setting where there are no organ clubs in the traditional sense. We will nevertheless find the notion of a club useful in a technical sense to define the constraints, as we will detail later.
We propose a formalization of this new kind of organ exchange, and propose an organ exchange approach where clubs are conceptually the primary agents---whether they are actually clubs, altruists, or donor-patient pairs, or a combination thereof. We support both intra-club and inter-club donations.
We prove that unlike in the standard model, the uncapped clearing problem is NP-complete.

To address
the issues that (1) a club (of which a donor-patient pair is a special case, as is an altruist donor) wants to receive no later than it gives, and (2) there are logistical limits as to how many operations can be conducted simultaneously, we introduce the concept of \emph{operation frames}. They provide a convenient framework for handling the problem of synchronizing different transplants, by imposing a partial order on them. We propose a linear integer program for this.

Experiments show that in the single-donation setting, operation frames improve planning by 34\%--51\%. Allowing up to two donors to donate in exchange for one kidney donated to their designated patient yields a further welfare increase.

\section{The Standard Model}\label{sec:model}
Today's kidney exchanges (and other modern barter exchanges) can be modeled as follows. There is a directed \emph{compatibility graph} $G = (V, E)$, where vertices represent participating parties and edges representing potential transactions~\cite{Roth07:Efficient,Abraham07:Clearing}.  In the kidney exchange context, the set of vertices $V$ is partitioned as $V = V_p \cup V_n$, where $V_p$ is the set of donor-patient pairs, and $V_n$ is the set of \emph{non-directed} donors (NDDs).
	
For sake of simplicity, we will consider all non-directed donor vertices as formal donor-patient pairs, where the patient is an artificial object---denoted by $\bot$---that is incompatible with any donor in the system. Vertices $u$ and $v$ are connected by a directed edge $u \to v$ if the donor in $u$ is compatible with the patient in $v$.
The exchange administrator can also define a weight function $w : E \to \mathbb{R}$ representing, for each edge $e = (u,v) \in E$, the underlying quality or priority given to a potential transplant from $u \to v$.
	
Given the model above, we wish to solve the \emph{clearing problem}, that is, we wish to select some subset of edges with maximum total weight subject to underlying feasibility constraints.  For example, a donor $d$ in a donor-patient pair $v = (d,p) \in V_p$ will donate a kidney if and \emph{only if} a kidney is allocated to his or her paired patient $p$.  Non-directed donors have no such constraint.  In the model described so far, any solution consists of only two kinds of structure:
\begin{itemize}
\item \emph{chains}, that is paths in $G$ initiated by NDDs and then consisting entirely of donor-patient pairs; and
\item \emph{cycles}, that is loops in $G$ consisting of vertices in $V_p$---and not non-directed donors in $V_n$.
\end{itemize}
Furthermore, in any feasible solution, these structures cannot share vertices: no donor can give more than one kidney.
	
In kidney exchange, a length cap $L$ is imposed on cycles for logistical reasons.  All transplants in a cycle must be performed simultaneously so that no donor can back out after his patient has received a kidney but before he has donated his kidney. In most fielded exchanges worldwide, $L=3$, so only $2$-cycles and $3$-cycles are allowed.

Chains do not need to be constrained in length, because it is not necessary to enforce that all transplants in the chain occur simultaneously. There is a chance that a donor backs out of her commitment to donate, but this event is less catastrophic than the equivalent in cycles. Indeed, a donor backing out in a cycle results in some other patient in the pool losing his donor while not receiving a kidney---that is, a participant in the pool is \emph{strictly} worse off than before---while a donor backing out in a chain simply results in the chain ending.  While that latter case is unfortunate, no participant in the pool is strictly worse off than before. In practice, however, a chain length cap is used, in order to make the planned solution more robust to last-minute failures~\cite{Dickerson12:Optimizing,Dickerson16:Position}.

The problem can be formulated as an integer program to find the optimal solution, and indeed there has been significant work on developing increasingly scalable integer programming algorithms and formulations for this problem (e.g.,~\cite{Roth07:Efficient,Abraham07:Clearing}). The state of the art formulation is called PICEF~\cite{Dickerson16:Position}. Its number of variables is polynomial in chain length cap and exponential in cycle length cap, which is not a problem in practice because the latter cap is small. Furthermore, the LP relaxation is very tight, causing good upper bounding in the search tree and therefore fast run time.
		
\section{Exchange Clubs as a Modeling Construct}\label{sec:clubs}
We propose significant generalizations to (kidney) exchange. We allow more than one donor to donate in exchange for their desired patient receiving a kidney. We also allow for the possibility of a donor willing to donate if any of a number of patients receive kidneys. Furthermore, we combine these notions and generalize them.
We formalize this by introducing the modeling concept of \emph{exchange clubs}.
\begin{definition}{\bf (Exchange club)}
An \emph{exchange club} $c$ is a tuple $(D_c, P_c, \alpha_c, \gamma_c)$ composed of
\begin{itemize}
	\item a (possibly empty) set of donors $D_c$;
	\item a (possibly empty) set of patients $P_c$;
	\item a real $\alpha_c \ge 1$ called ``matching multiplier''. Intuitively, this means that for each matched patient in $P_c$, the club is willing to donate (in expectation) $\alpha_c$ kidneys to the pool;
	\item a real $\gamma_c \ge 0$ called ``matching debt''.
\end{itemize}
\end{definition}

The idea of exchange clubs is that donors in $D_c$ are willing to donate kidneys only if doing so results in a tangible benefit (that is, kidneys donated) to patients in $P_c$.
More precisely, let $n^\text{ext}_d(t)$ be the number of kidneys donated from donors in $D_c$ to clubs other that $c$ by time $t$, and let $n^\text{ext}_p(t)$ be the number of kidneys donated from donors outside of $c$ to patients in $P_c$; then the following inequality must hold for all time $t$ in order for club $c$ to be willing to participate in the solution:
\begin{equation}\label{eq:nd}
	n^\text{ext}_d(t) \le \alpha_c n^\text{ext}_p(t) + \gamma_c
\end{equation}
For now, we ignore parameter $\gamma_c$, whose role and motivation will become clear later.

We can now formalize the uncapped generalized clearing problem as follows.
\begin{definition}{\bf (Disjoint clubs)}
	We say that two exchange clubs $c$ and $c'$ are \emph{disjoint} if $P_c \cap P_{c'} = \varnothing$ and $D_c \cap D_{c'} = \varnothing$.
\end{definition}

\begin{problem}{\bf (Uncapped generalized clearing problem)}
	Let $\cal{C}$ be a set of mutually disjoint exchange clubs; let ${\cal D} = \cup_{c\in{\cal C}} D_c$ and ${\cal P} = \cup_{c\in{\cal C}} P_c$ denote the overall set of donors and patients respectively. Furthermore, let $E \subseteq {\cal D}\times{\cal P}$ be the set of compatibility edges, and let $w : E \to \mathbb{R}$ a weighting function assigning a weight to every compatibility edge. We want to find a set of edges that maximizes the sum of weights and satisfies Inequality~\ref{eq:nd} assuming all the selected transplants occur simultaneously.
\label{pb:uncapped}
\end{problem}

\textbf{Matching Debts}.
We now explain the meaning of $\gamma_c$. Suppose a number $n^\text{ext}_p$ of patients in club $c$ receive kidneys from other clubs, and that the optimal solution of the problem requires that $n^\text{ext}_d$ donors from club $c$ donate a kidney to other clubs. If $n^\text{ext}_d < \alpha_c n^\text{ext}_p$, we say that club $c$ owes $\alpha_c n^\text{ext}_p - n^\text{ext}_d$ kidneys to the system. This is exactly the meaning of the ``matching debt'' of a club.  It reflects the sum of all debts that a club has cumulated in the past. Except for clubs defined by non-directed donors (which start with a debt of 1), each club typically starts with a debt of 0 at the beginning, and potentially increases or decreases its debt to the system over time.

The kidney exchange pool changes over time as the exchange conducts transplants. Debts allow the exchange to keep track of the state, which is then used as the input in the next optimization.

\subsection{The Standard Model is a Special Case}
\label{se:standard_model_as_special_case}
The (uncapped) standard model is a special case of our model:
\begin{itemize}
	\item each non-directed donor defines a club $c$ with no patient, and where he or she is the only donor. Furthermore, the club has $\gamma_c = 1$ (the value of $\alpha_c$ is irrelevant);
	\item each $(d,p)$ donor-patient pair in the standard models defines a club $c$, where $D_c = \{d\}$, $P_c = \{p\}$ and $\alpha_c = 1$.
\end{itemize}

At the same time, our new model allows for some important generalizations. For instance, consider the case where one patient $p$ has a set of two donors both willing to donate a kidney in exchange for only one kidney donated to $p$. In this case, the two donors and $p$ form a club with $\alpha_c = 2$.

The introduction of exchange clubs as a modeling construct calls for a different representation of the problem because the traditional donor-patient pairs cannot capture all the new aspects. Therefore, we explicitly represent donors and patients as different types of vertices in the graph. Figure~\ref{fi:club graph example} illustrates this under the further assumption that $\alpha_c=1, \gamma_c =0$ for all clubs. We represent donor vertices with a square and patient vertices with a circle. Observe that in Figure~\ref{fi:club graph example} it is not possible to extend the given solution with an edge from Donor~8 to Patient~7, as doing so would violate Inequality~\ref{eq:nd}: Club~D does not receive any kidney from other clubs, and therefore it cannot be asked to donate.
\begin{figure}[!h]
	\centering\includegraphics[width=.8\linewidth]{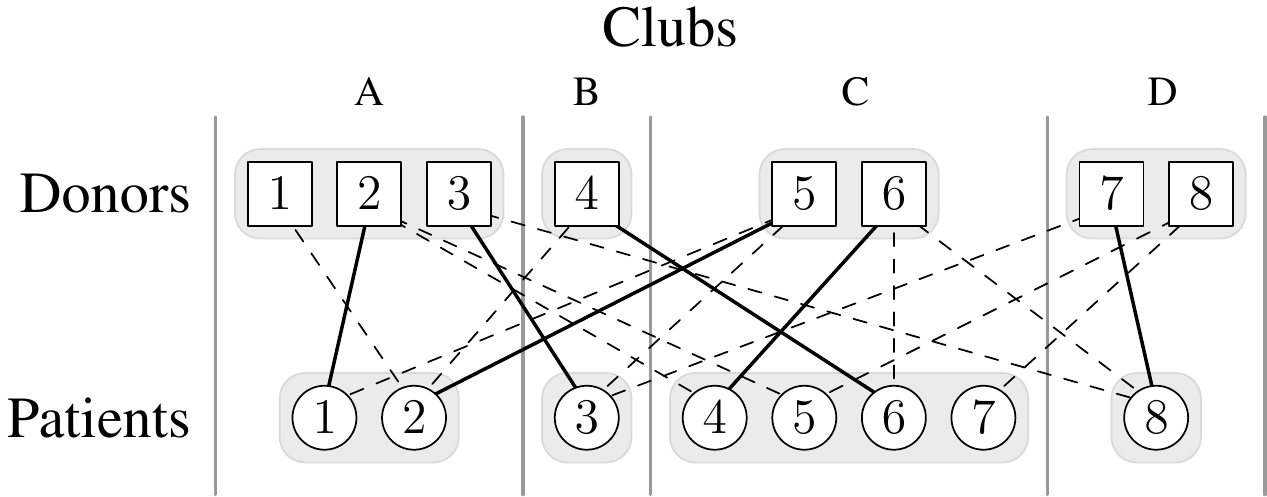}
	\caption{Tiny example problem instance. Vertical dashed lines separate different exchange clubs (so that, for instance, the first club has $D_1 = \{1, 2, 3\}, P_1 = \{1, 2\}$). Solid edges show a solution. Dashed edges represent unused compatibilities.
This figure assumes $\alpha_c = 1$, $\gamma_c = 0$ for all four clubs.}
	\label{fi:club graph example}
	\vspace{-4mm}
\end{figure}

\subsection{Uncapped Problem Formulation}\label{ssec:uncapped formulation}
It is not clear how one could apply an integer program formulation like the state-of-the-art PICEF formulation for the standard kidney exchange problem~\cite{Dickerson16:Position} in this new setting. This is because, in our setting, the feasibility (in the sense of Inequality~\ref{eq:nd}) of a particular donation depends on what transplants have already been conducted. It does not seem immediate how such aspects could be encoded in a formulation like PICEF.
However, one can easily write an integer linear program for the uncapped clearing problem that selects edges so as to maximize total weight of the selected edges subject to satisfying Constraint~\ref{eq:nd}.

Theorem~\ref{thm:hardness} shows that the decision problem associated with this problem is
is NP-complete. This is in stark contrast to the standard model where the uncapped version can be solved in polynomial time~\cite{Abraham07:Clearing}. This increase in hardness is the cost of our increased expressiveness.
\begin{theorem}\label{thm:hardness}
The decision problem associated with the uncapped generalized clearing problem is NP-complete.
\end{theorem}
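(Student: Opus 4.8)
The plan is to prove the two halves of NP-completeness separately. For membership in NP, a witness is simply the selected edge set $S\subseteq E$: since Problem~\ref{pb:uncapped} stipulates that all selected transplants occur simultaneously, the time-indexed Inequality~\ref{eq:nd} collapses to a single inequality per club evaluated at the final state, so checking that every donor and every patient occurs in at most one edge of $S$, computing $n^\text{ext}_d$ and $n^\text{ext}_p$ for each club, and verifying Inequality~\ref{eq:nd} together with the weight threshold all run in polynomial time.

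For hardness I would reduce from SAT. Given a formula with variables $v_1,\dots,v_n$ and clauses $c_1,\dots,c_m$, I construct the following instance. For each $v_i$ create one non-directed donor $a_i$ (a club with empty patient set and $\gamma=1$), acting as an ``activator''. Writing $p_i$ and $n_i$ for the number of positive and negative occurrences of $v_i$, create a club $T_i$ with a single patient $t_i$, one donor per positive occurrence of $v_i$, $\alpha_{T_i}=\max(p_i,1)$, and $\gamma_{T_i}=0$; symmetrically a club $F_i$ with patient $f_i$, one donor per negative occurrence, $\alpha_{F_i}=\max(n_i,1)$, $\gamma_{F_i}=0$. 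For each clause $c_\ell$ create a one-patient club holding a vertex $\pi_\ell$. The compatibility edges are $a_i\to t_i$ and $a_i\to f_i$ with weight $0$, and, whenever the donor $d$ corresponding to a particular literal occurrence belongs to clause $c_\ell$, the edge $d\to\pi_\ell$ with weight $1$. Set the target weight to $K=m$. The clubs are pairwise disjoint, the construction has polynomial size, the instance uses no long chains or cycles (so the hardness is attributable entirely to Inequality~\ref{eq:nd}, matching the paper's point about the uncapped problem), and no selection can exceed weight $m$ since each $\pi_\ell$ is matched at most once.

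The intended semantics: $a_i$ can donate only once, so it feeds $t_i$ or $f_i$ but not both, encoding the truth value of $v_i$; by Inequality~\ref{eq:nd} a positive-occurrence donor of $v_i$ may donate externally only if $t_i$ received an external kidney, i.e.\ only if $a_i\to t_i$ was chosen, so the positive literal-donors of $v_i$ are ``unlocked'' precisely when $v_i$ is set true, and the multiplier $\alpha_{T_i}$ permits all of them to be used simultaneously; symmetrically for negatives. Hence $\pi_\ell$ can be matched iff some literal of $c_\ell$ is satisfied, and a weight-$m$ solution exists iff all clauses are simultaneously satisfiable. The remaining arguments are routine: from a satisfying assignment, pick one true literal per clause, route the corresponding weight-$1$ edge, and check the matching and club constraints; conversely, from a weight-$m$ solution, read the assignment off which of $t_i,f_i$ is served by $a_i$ and argue consistency. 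I expect the main obstacle to be the gadget design itself: one must ensure the \emph{only} way to reach weight $m$ is through a globally consistent assignment, so that the aggregate, per-club nature of Inequality~\ref{eq:nd} cannot be abused to unlock both a literal and its negation, or to leak donations between variable gadgets. The disjointness requirement on clubs --- each donor and patient lies in exactly one club, so constraints cannot be shared across gadgets --- is precisely what makes isolating the gadgets from one another delicate.
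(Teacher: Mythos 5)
Your proposal is correct, and it takes a genuinely different route from the paper. The paper reduces from \textsc{Set-Packing}: for each subset $S_j$ it builds a club with one donor and $\ell=|S_j|$ patients and sets $\alpha=1/\ell$, so the donor is unlocked only when \emph{all} $\ell$ internal patients are matched (an ``AND'' gadget), and it detects packings of size $k$ via a large weight $M$ and the window $[kM,(k+1)M)$. You instead reduce from SAT and use the multiplier in the opposite direction: a club with one patient and many donors and $\alpha$ equal to the number of donors, so a \emph{single} incoming kidney unlocks all of them (a fan-out gadget), with the truth value of $v_i$ selected by the single non-directed donor $a_i$ that can feed $t_i$ or $f_i$ but not both, zero-weight variable edges, unit clause weights, and exact target $m$. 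The ``routine'' verification you defer does go through: the only edge into $t_i$ (resp.\ $f_i$) is from $a_i$, clause clubs have no donors and literal donors point only to clause patients, so there is no cross-gadget leakage; Inequality~\ref{eq:nd} with $\gamma=0$ forces $t_i$ (resp.\ $f_i$) to be matched before any donor of $T_i$ (resp.\ $F_i$) is used, and since $a_i$ donates at most once, at most one side per variable is ever unlocked, so a weight-$m$ solution yields a consistent satisfying assignment and conversely. One advantage of your construction worth noting: your multipliers satisfy the paper's stated restriction $\alpha_c\ge 1$, whereas the paper's own reduction uses $\alpha_{b_j}=1/\ell<1$, which technically falls outside its definition of an exchange club; your argument therefore establishes hardness already under the definition as written. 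The paper's reduction, in exchange, needs only single-donor and single-patient-or-few-patient clubs of a very uniform shape and a one-parameter weight threshold, which makes its correctness argument slightly shorter to state.
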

\begin{proof}[Proof sketch.]
   \noindent\emph{Membership in NP}: Given a set of mutually disjoint exchange clubs $\mathcal{C}$ and set of $k$ trades, it is trival to check in polynomial time if they satisfy Inequality~\ref{eq:nd}.

   \noindent\emph{NP-hardness}: We reduce from \textsc{SET-PACKING (SP)}.  An instance of \textsc{SP} takes a set of items $\mathcal{U}$, a family $\mathcal{S}$ of subsets of $\mathcal{U}$, and an integer $k$ as input; the task is to find a disjoint subfamily $\mathcal{X} \subseteq \mathcal{S}$ such that $|\mathcal{X}| = k$.

Assume that we are given an instance of \textsc{SP}.  We will now build an instance of our problem.  Let $n = |\mathcal{U}|$ be the number of items and $m = |\mathcal{S}|$ be the number of subsets.  Index the items $\{u_1, \ldots, u_n\} \in \mathcal{U}$ and the subsets $\{S_1, \ldots, S_m\} \in \mathcal{S}$.
Construct a disjoint set of clubs $\mathcal{C}$ as follows.  For each $u_i \in \mathcal{U}$, construct a club $a_i$ with no patient, one donor, and $\gamma_{a_i} = 1$.  For each subset $S_j \in \mathcal{S}$, construct a club $c_j$ with one patient and no donor.  Furthermore, for each subset $S_j$, construct a club $b_j$ with one donor and $\ell = |S_j|$ patients, $\gamma_{b_j} = 0$, and $\alpha_{b_j} = 1/\ell$.  Intuitively, this club will donate its one kidney iff each of the $\ell$ patients receives a kidney.

We now specify the set of legal transplants.  Let $M = |\mathcal{U}| + 1$.  For each subset $S_j \in \mathcal{S}$, draw a directed edge with weight $M$ from the single donor in club $b_j$ to the single patient in club $c_j$.  Furthermore, for each item $u_i \in \mathcal{U}$ and subset $S_j \in \mathcal{S}$ such that $u_i \in S_j$, draw one directed edge with weight $1$ from the single donor in club $a_i$ to the patient corresponding to item $u_i$ in club $b_j$.  Figure~\ref{fig:hardness} shows the final construction.
\begin{figure}[ht!bp]
\begin{subfigure}[b]{.38\linewidth}
\begin{center}
{\small Universe set:}\\
$\{1,2,3,4,5,6,7\}$\\
\vspace{2mm}
\small
{Set packing instance:}\\
\begin{tabular}{cc}
  \toprule
  $S_1$ & $\{1,3,4\}$\\
  $S_2$ & $\{1,3,5,7\}$\\
  $S_3$ & $\{4,6\}$\\
  \bottomrule
\end{tabular}
\end{center}
\caption{}
\end{subfigure}
\hfill
\begin{subfigure}[b]{.60\linewidth}
\centering
\includegraphics[width=.92\linewidth]{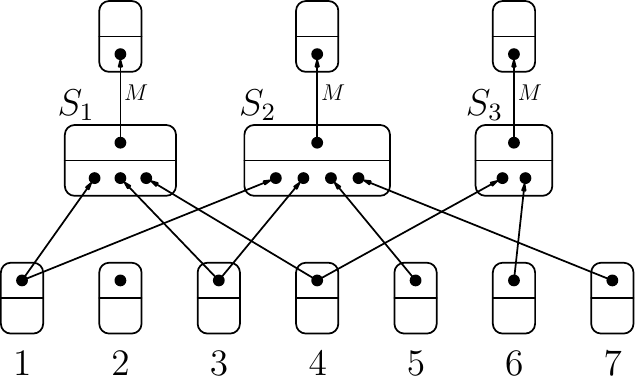}
\caption{}
\end{subfigure}
\caption{(b) Constructed instance of our problem. Each nodes represents a club, with donors in the top part, and patients in the bottom part. Edges represent compatibilities.}
\vspace{-.1in}
\label{fig:hardness}
\end{figure}

We will now show that a solution exists for the instance of \textsc{SC} if and only if our problem has a legal matching with weight in $[kM, (k+1)M)$.

\noindent
($\Rightarrow$)  Suppose there exists some solution $\mathcal{S'} = \{S'_1, \ldots, S'_k\}$ to the \textsc{SP} problem; that is, there exists some disjoint subfamily of $\mathcal{S}$ of size $k$.  Then, for each subset $S'_j \in \mathcal{S'}$, for each element $u_i$ in $S'_j$, use the edge from club $a_i$ to club $b_j$.  By the disjointness of the subfamily $\mathcal{S'}$, each single-donor club $a_i$ for $i \in [n]$ donates to at most one club.  Furthermore, each club $b_j$ corresponding to $S'_j \in \mathcal{S}'$ receives one kidney for each patient in its club; by construction, their matching multiplier is now satisfied.  Thus, for each $S'_j \in \mathcal{S'}$, include the edge with weight $M$ from the one donor in $b_j$ to the one patient in $c_j$.  This results in a matching of weight at least $kM$, but no more than $kM + n < (k+1)M$.

\noindent
($\Leftarrow$)  Suppose there exists a matching in our problem such that the weight of the matching is in $[kM, (k+1)M)$.  Then exactly $k$ of the edges between exactly $k$ pairs of clubs $b_\cdot$ and $c_\cdot$ are used, at total weight $kM$.  Let $j' \in [m']$ index those clubs $b_{j'}$ that use their one outgoing edge to club $c_{j'}$.  Each club $b_{j'}$ uses that edge if and only if every one of its internal patients receives a kidney; since each single-donor club $a_\cdot$ can give at most one kidney, they are used at most once.  Since at most $n$ clubs $a_\cdot$ can be used, each at additional weight $1$, the final matching is of weight at most $km+n < (k+1)M$; further, exactly $k$ clubs $b_{j'}$ were fulfilled completely, corresponding to exactly $k$ disjoint subsets $S_{j'} \in \mathcal{S}$ being packed.
\end{proof}

\section{Operation Frames}

While the above modeling approach is promising, it has a major shortcoming: it might require that a potentially large number of operations happen at the same time so as to honor the condition that the donors not be operated on strictly before patients in their clubs receive kidneys. An example is provided in Figure~\ref{fi:cycle}, where we would need patients $\{1,2,3,6,7\}$ and donors $\{1,2,3,5,6\}$ to be operated on at the same time.
\begin{figure}[!h]
\vspace{-2mm}
\centering\includegraphics[width=.8\linewidth]{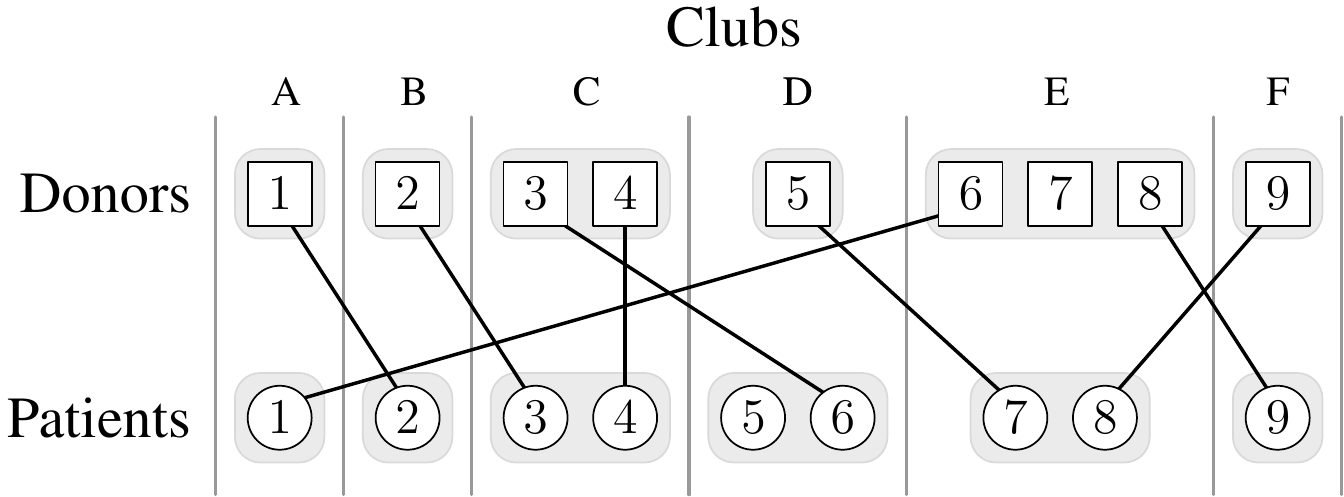}
	\caption{Example that requires $5$ simultaneous transplants.}
\vspace{-2mm}
\label{fi:cycle}
\end{figure}

\noindent This is not practically viable for at least two reasons:
\begin{itemize}
\item the success probability of all the planned transplants in the structure succeeding in their pre-operation blood type compatibility tests (aka. crossmatch test) and other pre-transplant testing decreases multiplicatively with the number of edges in the planned structure,\footnote{For further details about pre-transplant test failures, see Dickerson, Procaccia, and Sandholm~\shortcite{Dickerson13:Failure} and Blum et al.~\shortcite{Blum15:Ignorance}.} and
\item the logistic (and financial) details are hard to execute---ten people to operate on have to be coordinated, together with the surgeons and staff needed for ten surgeries.
\end{itemize}

\noindent In order to solve this synchronization issue, we introduce the concept of \emph{operation frames}. An operation frame $t$ is an edge set of size up to $K_t$, representing operations to be performed at the same time.
The introduction of operation frames enables us to reason in terms of order in which the operations will be carried out. The chronological order imposed on the operation frames is \emph{partial}. For this reason, we can formalize the set and relationships among operation frames by means of a directed acyclic graph (DAG) $F=(T, B)$, where the set of vertices (i.e., $T$) coincides with the set of operation frames, while the set of edges $B \subseteq T\times T$ denotes the \emph{happens-strictly-before} chronological (partial) order. We say that operation frame $u$ happens strictly before operation frame $v$, denoted $u \hspace{-1pt}\rightsquigarrow\hspace{-1pt} v$, if there exists a directed path in $F$ from $u$ to $v$.

The introduction of operation frames enables Inequality~\ref{eq:nd} to be written in terms of \emph{logical} time, that is, substituting the notion of time with the partial happens-strictly-before order.
Thus, for any frame $\tau \in T$, the number of kidneys that were surely (i.e., for any possible linearization of the DAG $F$) donated to and from club $c$ at the time when $\tau$ is executed is
$
	n_d(\tau) = \sum_{\tau' \rightsquigarrow \tau} d(c, \tau'),\quad
	n_p(\tau) = \sum_{\tau' \rightsquigarrow \tau} p(c, \tau'),\nolinebreak
$
where $d(c, \tau')$ and $p(c, \tau')$ represent the number of transplants from and to club $c$ scheduled for operation frame $\tau'$.

We argue that operation frames provide a richer problem structure, as it is now possible to assign a (partial) chronological order to the operations we plan to perform. Furthermore, they encode the condition that ``no more than $K_t$ people get operated on at the same time'' in a very natural way: every operating frame has a parameter $K_t$. Indeed, operation frames guarantee that not too many surgeries are planned to happen at the same time. Conceptually, they are equivalent to imposing chain and cycle length caps, as it is done in standard model. However, here we are allowing much richer exchange structures (and, as presented so far, there is no way of specifying a different cap on the size of chains versus cycles versus other structures). This is because the operation frame size cap $K_t$ represents an actual limit on the number of simultaneous operations that can be accommodated. Different frames can have different size constraints.

Finally, operation frames allow an integer programming formulation of the problem that (unlike PICEF) uses a number of variables that is polynomial in the maximum size cap $\max_t K_t$. We present that formulation in the next section.

\subsection{Capped Problem Formulation}\label{ssec:capped formulation}
The idea of operation frames can be plugged into the uncapped integer program,
leading to the following formulation.

\begin{formulation}[!h]
\centering\includegraphics[width=.82\linewidth]{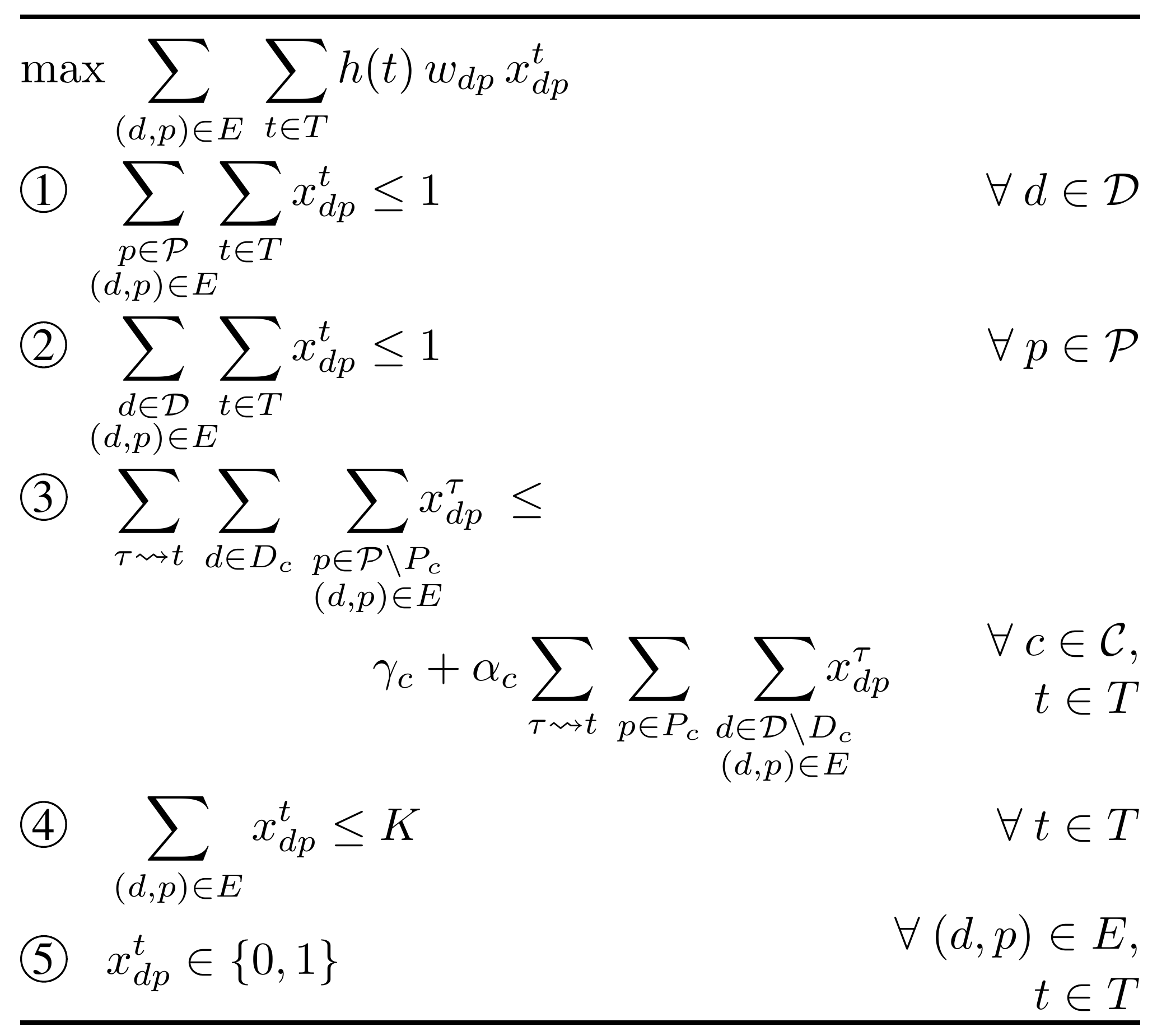}

	\caption{MIP formulation for the capped problem.}
	\label{fo:capped}
\end{formulation}

We let $x^t_{dp}$ be a binary value (Constraint~\circled{5}) indicating whether the transplant represented by the edge $(d,p)\in E$ is scheduled for operation frame $t$. Analogous to the uncapped case, Constraints~\circled{1} and~\circled{2} ensure that each donor donates at most one kidney and that each patient receives at most one kidney, respectively. Constraint~\circled{4} ensures that in any operation frame $t\in\cal T$, no more than $K$ transplants are  scheduled. Constraint~\circled{3} enforces that at any time $t$, for each club $c \in\cal C$, the total number of kidneys donated from club $c$ does not exceed $\lfloor \gamma_c + \alpha_c n^{\text{ext}}_p(t) \rfloor$, where $n^{\text{ext}}_p(t)$ is the total number of kidneys donated to club $c$ from other clubs, before or at operation frame $t$.
The objective function ensures that a maximum-weight solution is found.\footnote{One can also model temporal preferences by multiplying the edge weights by discounts $h(t)$, which depend on which operation frame $t$ the surgery is conducted. (This assumes that the time between frames is exogenous---but not necessarily constant---that is, the time between frames does not depend on what transplants the optimizer decides to put in each frame.) This discounting is already include in the objective in Formulation~\ref{fo:capped}.}

\subsection{Operation Frames Counter Myopia}
Present-day kidney exchanges operate in a batch setting, potentially planning in a single shot long chains that will, in practice, execute in segments over many months.  Solvers for the standard problem (e.g., those based on PICEF) optimize on a batch-by-batch basis, selecting the global optimum solution only inside of a single batch, and not considering future batches.  Our approach is more powerful than the standard batch-based one also in the sense that it inherently breaks long structures into shorter ones that execute sequentially. Figure~\ref{fi:myopic} shows an example compatibility graph where our model will return a higher-value solution than the optimal solution in the traditional model.
\begin{figure}[!h]
  \vspace{-3mm}
  \centering
  \includegraphics[width=0.75\linewidth]{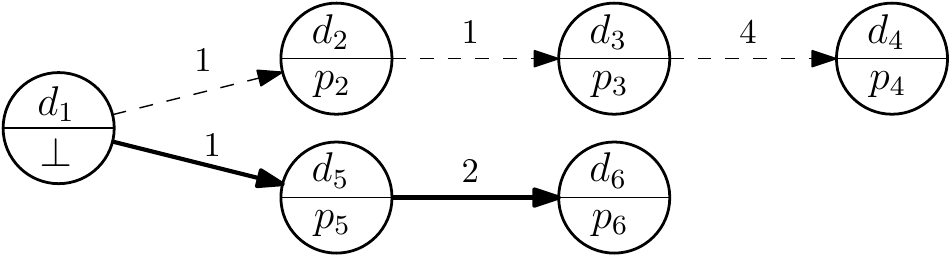}
  \vspace{-2mm}
  \caption{An example graph where Formulation~\ref{fo:capped} returns a higher-value solution than the standard batch approach, even with only the standard kinds of vertices available.  Given a cap $K=2$, the standard approach will choose the lower $2$-chain over the upper $2$-chain for utility $3$, while our solver will choose to match the upper $3$-chain across two frames for greater overall utility of $6$.}
  \label{fi:myopic}
  \vspace{-2mm}
\end{figure}

Figure~\ref{fo:capped} shows that the capped approach in the standard model cannot consider certain solutions; yet, our capped approach---based on the concept of operations frames---optimizes across all the operation frames at the same time, resulting in less myopic behavior. Under the assumption that the kidney exchange pool is not affected by any exogenous behavior (e.g., compatibility failures, deaths of donors or patients, dynamic insertions and deletions of edges and vertices), our formulation is guaranteed to find a globally optimal allocation of transplants across all operation frames. In contrast, even under these strong assumptions, present-day solvers for the standard model will (by design) fail to find a globally optimal solution across batches.

\section{Experiments}
We conducted experiments to evaluate the techniques. The experiments are conducted using the real data from the UNOS kidney exchange that started in 2010. In that data we have the vertices that have a patient and one or more donors, and vertices that represent altruists. We also have patient and donor attributes, and we use the UNOS rules for determining compatibility of potential transplants (these rules take into account blood type, tissue type, creatinine, acceptable transplant centers, etc). So, in this master data set we effectively have a graph that has as its vertices all the vertices that have been in the UNOS exchange. All edges have unit weight.

Also, around 70\% of edges in kidney exchanges fail due to various pre-transplant tests, so the planned transplant cannot occur after all~\cite{Dickerson13:Failure}. We take this into account by randomly removing 70\% of the edges independently from the master graph.
In each experiment, we subsample vertices from this master graph (and keep the edges between the selected vertices) to generate multiple problem instances, i.e., input graphs. For each value of the pool size, we repeat the experiment with 50 different random seeds. The operation frames DAG was chosen to be a total order.

The first experiment compares the standard model (Section~\ref{se:standard_model_as_special_case}) to our capped operation-frame-based model. In both cases, at most one donor from each pair is used. In both cases, the cap is four; this is a conservative experimental design because often in the standard model cycles are further capped to be at most length three, which would disadvantage the standard model compared to our approach. In both cases, we sample around 5\% of the vertices to be altruists from the master graph and the rest are sampled uniformly from the non-altruist vertices.
Figure~\ref{fi:improvement_over_standard} shows that our approach yields a 34\%--51\% improvement via better---less myopic---planning even in this static setting.
\pgfplotsset{grid style={dotted,black!50!white}}
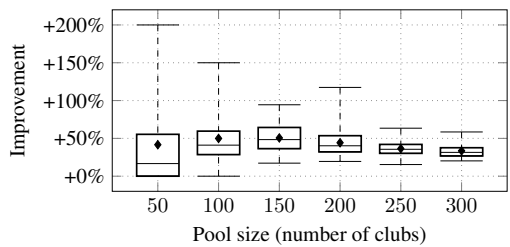
\begin{figure}[!h]
\vspace{-1mm}
\centering
\begin{tikzpicture}[scale=0.80]
\begin{axis} [grid,ylabel shift=0mm,enlarge x limits=0.15,xtick=data, width=.95\columnwidth, height=4.6cm, xlabel={Pool size (number of clubs)}, ylabel={Improvement}, ytick={0, 50, 100, 150, 200}, yticklabels={+0\%,+50\%,+100\%,+150\%,+200\%}]
    \addplot [box plot median] table {stdmaxdon1.dat};
    \addplot [box plot box] table {stdmaxdon1.dat};
    \addplot [box plot top whisker] table {stdmaxdon1.dat};
    \addplot [box plot bottom whisker] table {stdmaxdon1.dat};
    \addplot [mark=diamond*,only marks] coordinates {
      (50.000000, 41.609518)
      (100.000000, 49.919076)
      (150.000000, 50.583188)
      (200.000000, 44.211556)
      (250.000000, 36.451854)
      (300.000000, 33.500494)
    };
\end{axis}
\end{tikzpicture}
\vspace{-2mm}
\caption{Improvement in the number of matches from the capped operation frame approach over the standard approach. Diamond marks denote averages, the horizontal line in the box the median, the box edges the first and third quartiles, and the whiskers the minimum and maximum across the points.}
\vspace{-2mm}
\label{fi:improvement_over_standard}
\end{figure}

In the UNOS pool, 5.7\% of the non-altruist vertices have multiple willing donors, but at most one from each vertex will be used. The second experiment is exactly like the first, except that in the capped operation-frame-based model, up to two of the donors from each vertex can be used (i.e., $\alpha_c = 2$ for clubs $c$ that have more than one donor). Figure~\ref{fi:improvement_over_standard_with_up_to_2} shows that this leads to a larger increase over the standard approach.
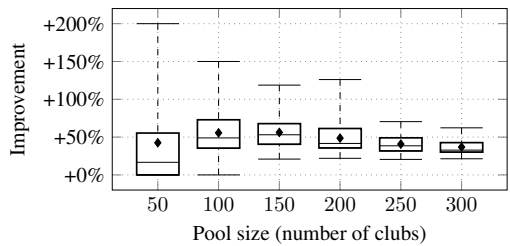
\begin{figure}[!h]
\vspace{-1mm}
\centering
\begin{tikzpicture}[scale=0.80]
\begin{axis} [grid,ylabel shift=0mm,enlarge x limits=0.15,xtick=data, width=.95\columnwidth, height=4.6cm, xlabel={Pool size (number of clubs)}, ylabel={Improvement}, ytick={0, 50, 100, 150, 200}, yticklabels={+0\%,+50\%,+100\%,+150\%,+200\%}]
    \addplot [box plot median] table {stdmaxdon2.dat};
    \addplot [box plot box] table {stdmaxdon2.dat};
    \addplot [box plot top whisker] table {stdmaxdon2.dat};
    \addplot [box plot bottom whisker] table {stdmaxdon2.dat};
    \addplot [mark=diamond*,only marks] coordinates {
		(50.000000, 42.509518)
		(100.000000, 55.587833)
		(150.000000, 56.207378)
		(200.000000, 48.643752)
		(250.000000, 40.732958)
		(300.000000, 36.815120)
    };
\end{axis}
\end{tikzpicture}
\vspace{-2mm}
\caption{Improvement in the number of matches from the capped operation-frame approach over the standard approach when up to two donors can be used from each non-altruist vertex in the former approach.}
\vspace{-3mm}
\label{fi:improvement_over_standard_with_up_to_2}
\end{figure}

In the third experiment we sample more multi-donor pairs---roughly 10\% of the entire pool. We do this to test how the system would perform if more multi-donor pairs would be present---as is conceivable in the future as knowledge about kidney exchange spreads and possibly also as multiple donors could actually be used. Figure~\ref{fi:10_percent_multidonor} shows that allowing for up to two donors to be used from a vertex leads to an improvement over allowing only up to one to be used. In both cases, we use our capped operation-frame approach.
\begin{figure}[!h]
\vspace{-1mm}
\centering
\begin{tikzpicture}[scale=0.80]
\begin{axis} [grid,ymax=25,ylabel shift=0mm,enlarge x limits=0.15,xtick=data, width=.95\columnwidth, height=4.6cm, xlabel={Pool size (number of clubs)}, ylabel={Relative improvement}, ytick={0, 5, 10, 15, 20}, yticklabels={+0\%,+5\%,+10\%,+15\%,+20\%}]
    \addplot [box plot median] table {boost.dat};
    \addplot [box plot box] table {boost.dat};
    \addplot [box plot top whisker] table {boost.dat};
    \addplot [box plot bottom whisker] table {boost.dat};
    \addplot [mark=diamond*,only marks] coordinates {
		(50.000000, 5.929654)
		(100.000000, 10.769957)
		(150.000000, 7.336270)
		(200.000000, 5.464621)
		(250.000000, 5.725581)
		(300.000000, 4.528407)
    };
\end{axis}
\end{tikzpicture}
\vspace{-2mm}
\caption{Improvement in the number of matches from allowing up to two donors to be used from each vertex compared to allowing only one donor from each vertex to be used.}
\vspace{-4mm}
\label{fi:10_percent_multidonor}
\end{figure}
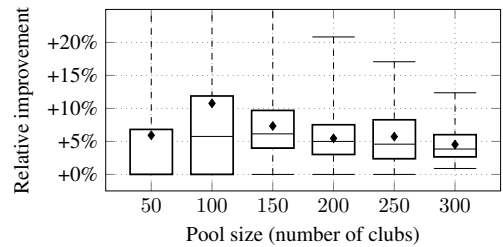

\section{Conclusions and Future Research}
Motivated by the reality of fielded kidney exchanges, in this paper we proposed significant generalizations to kidney exchange---and barter markets more generally. Specifically, we moved the model from individual and independent patient-donor pairs to the modeling concept of multi-donor and multi-patient organ \emph{clubs}, where a club is willing to donate organs outside the club if and only if the club receives organs from outside the club according to expressed preferences.  We proved that unlike in the standard model, the uncapped clearing problem is NP-complete.

We presented the notion of operation frames that sequence the operations across batches, and gave IP formulations that optimally clear these new types of markets.
Experiments show that in the single-donation setting, operation frames improve planning by 34\%--51\%. Allowing up to two donors to donate in exchange for one kidney donated to their designated patient yields a further increase in social welfare.

Operation frames include a notion of time via the \emph{happens-before} ordering; yet, this does not capture the full dynamics of kidney exchange, where vertices and edges arrive and disappear over time. Finding an optimal matching policy for fully dynamic kidney exchange is an open problem from both the theoretical~\cite{Unver10:Dynamic,Akbarpour14:Dynamic,Anderson14:Stochastic,Anderson15:Dynamic} and computational~\cite{Awasthi09:Online,Dickerson12:Dynamic,Dickerson13:Failure,Dickerson15:FutureMatch,Glorie15:Robust} points of view. Perhaps operation frames---with or without clubs---can be used to enhance planning even in those contexts as we showed it can in the static setting.

Exploring incentive issues in this new model is also interesting.
Generalizations to the basic kidney exchange model have already enabled mechanisms that circumvent strong impossibility results~\cite{Hajaj15:Strategy-Proof,Ashlagi14:Free,Ashlagi15:Mix,Toulis11:Random}.  The more expressive models presented in this paper could similarly result in advances in mechanism design.

\section*{Acknowledgments}
This work was supported by NSF grants IIS-1617590, IIS-1320620, IIS-1546752, and ARO awards W911NF-17-1-0082, W911NF-16-1-0061.

\clearpage\newpage
{\small
\bibliographystyle{named}
\bibliography{dairefs,citations}
}

\appendix

\section{Appendix A: Limited Horizon Approximation}
Fielded kidney exchanges routinely solve NP-hard problems when matching patients to donors.  The models we presented here, being generalizations of the standard kidney exchange model, are also solving an NP-hard problem.  We presented MIP formulations to solve these models, but---by virtue of being intractable from a complexity theory point of view---it is the case that certain inputs exist that will make these formulations run slowly.  In the event that MIP Formulation~\ref{fo:capped} takes too much time, it is possible to consider the following algorithm:
\begin{enumerate}
	\item Solve the problem above with ${\cal T} = \{1, \dots, \tilde T\}$, where $\tilde T$ is small;
	\item Remove the matched structures, update the matching debts $\gamma_c$ for all clubs;
	\item Repeat steps 1-2 until no more matches are found.
\end{enumerate}
This is halfway between the batch-based approach (typical of the standard model) and the full version of Formulation~\ref{fo:capped}.

\section{Appendix B: Overview of PICEF}

In this section, we briefly overview the Position-Indexed Chain-Edge Formulation (PICEF), the current state-of-the-art approach to optimally clearing traditional batch, single-donor-single-patient kidney exchanges~\cite{Dickerson16:Position}.  PICEF is an IP formulation that uses one binary variable for each cycle in a compatibility graph, but uses binary variables for only the \emph{edges} in chains, not for each chain itself---of which there would be too many to even write down in memory.  The intuition for this representation was drawn from earlier algorithms based on the prize-collecting traveling salesperson problem (PC-TSP)~\cite{Anderson15:Finding}, and on prior ``edge formulation''-style algorithms~\cite{Constantino13:New}.

The innovation of PICEF was the use of position indices on edge variables, which results in polynomial counts of \emph{both} constraints and variables (assuming a constant cycle cap but non-constant chain cap, which is realistic); this is in stark contrast to the \emph{exponential} number of constraints in the PC-TSP-based model.

We first define an index set $\K'(i,j)$, the set of possible positions at which edge $(i,j)$ may occur in a chain in the compatibility graph $G$. Given a maximum chain length $K$ (which can be a constant or allowed to rise with $|V_p|$), for $i, j \in V$ such that $(i, j) \in E$,
\[
\K'(i,j) =
\begin{cases}
    \{1\} & i \in V_n \\
    \{2, \dots, K\} & i \in V_p
\end{cases}.
\]

Any edge leaving a non-directed donor $v \in V_n$ can only be in position \num{1} of a chain, and any
edge leaving a patient-donor vertex in $V_p$ may be in any position up to the chain-length cap
$K$, except \num{1}---that is, it cannot trigger a chain.

For each $(i,j) \in E$ and each $k \in \K'(i,j)$, create variable $y_{ijk}$,
which takes value $1$ if and only if edge $(i,j)$ is selected at position $k$ of
some chain. For each cycle $c$ in $G$ of length up to $L$, define a binary variable $z_c$ to
indicate whether $c$ is used in a packing.   Then we can define the formal PICEF model, shown in Formulation~\ref{form:picef}.

\begin{formulation}[!h]
\rule{\linewidth}{1px}\\[3mm]
\begin{subequations}
{\small
  \begin{align}
    \max && \sum_{(i,j)\in E} \sum_{k\in \K'(i,j)} w_{ij}y_{ijk} && \nonumber\\ 
    && + \sum_{c \in \cycles{}} w_c z_c & \label{eq:picef_obj}\\
    \text{s.t.} && \sum_{j:(j,i)\in E} \sum_{k\in \K'(j,i)} y_{jik} && \nonumber\\
    && + \sum_{\mathclap{c \in \cycles{} : i \text{ appears in } c}} z_c \leq 1 && i \in V_p \label{eq:picef_a} \\
    && \sum_{j:(i,j)\in E} y_{ij1} \leq 1 && i \in V_n \label{eq:picef_b} \\
    && \sum_{\substack{j: (j,i) \in E \wedge \\ k \in \K'(j,i)}} y_{jik} \geq \sum_{j: (i,j) \in E}y_{i,j,k+1} && 
    \begin{aligned}
      & i \in V_p,\\ 
      & k \in \{1, \dots, K-1\}
    \end{aligned}
    \label{eq:picef_c} \\
    && y_{ijk} \in \{0,1\} &&
    \begin{aligned}
      & (i,j) \in E,\\
      & k \in \K'(i,j)
    \end{aligned}
    \label{eq:picef_d} \\
    && z_c \in \{0, 1\} && c \in \cycles{} \label{eq:picef_e}
  \end{align}
}
\end{subequations}
\rule{\linewidth}{1px}
\caption{PICEF formulation.}
\label{form:picef}  
\end{formulation}

Inequality~(\ref{eq:picef_a}) is the capacity constraint for patients: each
patient vertex is involved in at most one chosen cycle or incoming edge of a
chain.  Inequality~(\ref{eq:picef_b}) is the capacity constraint for altruists: each
altruist vertex is involved in at most one chosen outgoing edge.  The flow
inequality~(\ref{eq:picef_c}) ensures that patient-donor pair vertex $i$ has an
outgoing edge at position $k+1$ of a selected chain only if $i$ has an incoming edge at
position $k$; we use an inequality rather than an equality since the final vertex of a chain will have an incoming edge but no outgoing edge.

Code for the PICEF model can be found at the \url{jamestrimble/kidney\_solver} repo on Github.  For more information on this model, and for theoretical and experimental results regarding its correctness and efficacy, we direct the reader to the published paper~\cite{Dickerson16:Position}.

\end{document}